\newtheorem{prop}{Proposition}
\newtheorem{cor}{Corollary}
\begin{document}

\def\spacingset#1{\renewcommand{\baselinestretch}%
{#1}\small\normalsize} \spacingset{1}

%%%%%%%%%%%%%%%%%%%%%%%%%%%%%%%%%%%%%%%%%%%%%%%%%%%%%%%%%%%%%%%%%%%%%%%%%%%%%%

  \title{\bf Estimating Real Log Canonical Thresholds}
  \author{Toru Imai\footnote{imai.toru.7w@kyoto-u.ac.jp} \\
    Kyoto University}
  \date{}
  \maketitle

\bigskip
\begin{abstract}
	Evaluation of the marginal likelihood plays an important role in model selection problems. 
The widely applicable Bayesian information criterion (WBIC) and singular Bayesian information criterion (sBIC) give approximations to the log marginal likelihood, which can be applied to both regular and singular models. 
When the real log canonical thresholds are known, the performance of sBIC is considered to be better than that of WBIC, but only few real log canonical thresholds are known.
	In this paper, we propose a new estimator of the real log canonical thresholds based on the variance of thermodynamic integration with an inverse temperature.
	In addition, we propose an application to make sBIC widely applicable.
	Finally, we investigate the performance of the estimator and model selection by simulation studies and application to real data.
\end{abstract}

\noindent%
{\it Keywords:}  effective number of parameters, model evidence, model selection, sBIC, singular model
\vfill

%\newpage
\spacingset{1.45} % DON'T change the spacing!

\section{Introduction}
\label{sec:intro}
Let ${\bf X}^n=({\bf X}_1,...,{\bf X}_n)$ denote a sample of $n$ independent and identically distributed observations with each ${\bf X}_i \in \mathbb{R}^h$ drawn from a data generating distribution $q$. Let $M$ be a $d$-dimensional model with associated parameter ${\bf \theta} \in {\bf \Omega} \subset \mathbb{R}^{d}$, where ${\bf \Omega}$ is a parameter space. Let $p({\bf X}^n|{\bf \theta}, M)$ be the likelihood function and $\varphi({\bf \theta} | M)$ be a prior distribution. In this paper, we assume $p({\bf X}^n|{\bf \theta}, M)$ is differentiable and its first derivative function is not a constant.
The log marginal likelihood $\log \{ L(M) \}$ for model $M$ is defined as
\begin{equation*}
\log \{ L(M) \} := \log \int_{{\bf \Omega}} p({\bf X}^n | {\bf \theta}, M) \varphi({\bf \theta} | M) d{\bf \theta}.
\end{equation*}

A statistical model is termed regular if the mapping from a model parameter to a probability distribution is one-to-one and if its Fisher information matrix is positive definite. Otherwise, a statistical model is called singular.

There are numerous methods to approximate the marginal likelihood directly by using numerical integration approaches, for instance the Monte Carlo method (see, for example, \cite{Gelman1998} and \cite{Friel2012}), but according to \cite{Oates2016}, the estimators of marginal likelihood based on Monte Carlo sampling generally have high variance.

In the case of regular models, the Bayesian information criterion (BIC) \citep{Schwarz1978} gives an approximation of the log marginal likelihood up to $O_p(1)$. However, BIC is not applicable to singular models. The widely applicable Bayesian information criterion (WBIC) \citep{Watanabe2013} is proposed to approximate the log marginal likelihood up to $O_p[\surd \{\log (n)\}]$ for singular models and $O_p(1)$ for regular models. In addition, the Bayesian information criterion for singular models (sBIC) \citep{Drton2017b} approximates the log marginal likelihood up to $O_p(1)$ for models whose real log canonical thresholds and their multiplicities are known.

Another approach to evaluate the marginal likelihood is power posterior \citep{Friel2008}. Both WBIC and power posterior are based on thermodynamic integration. Although WBIC uses one temperature, power posterior uses many temperatures. Therefore, the computational cost of power posterior is much higher than that of WBIC \citep{Friel2017}.

When the real log canonical thresholds are known, simulation studies by \cite{Drton2017b} indicate that the performance of sBIC can be better than that of WBIC, but few real log canonical thresholds are known.

In this paper, we propose a new estimator of the real log canonical threshold based on the variance of thermodynamic integration with an inverse temperature. In addition, we propose an application to make sBIC widely applicable.

The paper is organized as follows: the key ideas and results are introduced in Section 2. 
We derive an estimator of the real log canonical threshold in Section 3. 
Then, we propose a widely applicable sBIC (WsBIC) in Section 4. 
We conduct numerical experiments to investigate the estimators of the real log canonical threshold and the performance of WsBIC in Section 5.  
Finally, we present some discussions in Section 6.

\section{Thermodynamic integration and WBIC}
\label{sec:wbic}

For any integrable function $f({\bf \theta})$ and a non-negative variable $t \in \mathbb{R}_{\ge 0}$, let $\mathbb{E}_{\bf \theta}^t \{f({\bf \theta}) \}$ and $\mathbb{V}_{\bf \theta}^t \{ f({\bf \theta}) \}$ be defined as
\begin{eqnarray*}
\mathbb{E}_{\bf \theta}^t \{ f({\bf \theta}) \} &:=& \frac{\displaystyle \int_{{\bf \Omega}} f({\bf \theta}) p({\bf X}^n | {\bf \theta}, M)^t \varphi ({\bf \theta} | M) d{\bf \theta}}{\displaystyle \int_{{\bf \Omega}} p({\bf X}^n | {\bf \theta}, M)^t \varphi ({\bf \theta} | M) d{\bf \theta}},  \\
\mathbb{V}_{\bf \theta}^t \{f({\bf \theta}) \} &:=& \mathbb{E}_{\bf \theta}^t \{f({\bf \theta})^2 \} - \Bigl[ \mathbb{E}_{\bf \theta}^t \{f({\bf \theta})\} \Bigr]^2,
\end{eqnarray*}
respectively. Here, $t$ is called an inverse temperature.

Let $F(t)$ be defined as
\begin{equation*}
F(t) := \log \int_{{\bf \Omega}} p({\bf X}^n | {\bf \theta}, M)^t \varphi({\bf \theta} | M) d{\bf \theta}.
\end{equation*}
Here, $F(0)=0, F(1)=\log L(M)$ by definition. Then, we obtain
\begin{equation*}
\log L(M) = \int_0^1 \frac{d}{dt} F(t) dt = \int_0^1 \mathbb{E}_{\bf \theta}^t \{ \log p({\bf X}^n | {\bf \theta}, M) \} dt. 
\end{equation*}

Since we have $d^2 F(t)/dt^2 = \mathbb{V}_{\bf \theta}^t \{ \log p({\bf X}^n | {\bf \theta}, M) \} > 0$, $d F(t)/dt = \mathbb{E}_{\bf \theta}^t \{ \log p({\bf X}^n | {\bf \theta}, M) \}$ is an increasing function. Hence, by the mean value theorem, there exists a unique temperature $t^* \in (0,1)$ such that
\begin{equation}
\log \{ L(M) \} = \mathbb{E}_{\bf \theta}^{t^*} \{ \log p({\bf X}^n | {\bf \theta}, M) \}. \label{unitemp}
\end{equation}

Based on equation (\ref{unitemp}), WBIC \citep{Watanabe2013} is defined as
\begin{equation*}
\mathrm{WBIC}(M) = \mathbb{E}_{\bf \theta}^{t_w} \{ \log p({\bf X}^n | {\bf \theta}, M) \}, 
\end{equation*}
where $t_w=1/\log (n)$.

Under a mild assumption that is given in the Appendix, \cite{Watanabe2000, Watanabe2001a, Watanabe2009b} showed that
\begin{equation*}
\log \{ L(M) \} = \log p({\bf X}^n | {\bf \theta}_0, M) - \lambda \log n + (\mathfrak{m}-1) \log \log (n) + O_p(1),
\end{equation*}
where $ {\bf \theta}_0$ is the parameter that minimizes the Kullback-Leibler divergence from a data generating distribution to a statistical model, $\lambda$ and $\mathfrak{m}$ are termed the real log canonical threshold and its multiplicity, respectively.

\cite{Watanabe2013} also showed that 
\begin{equation}
\mathbb{E}_{\bf \theta}^{t} [ \log p({\bf X}^n | {\bf \theta}, M) ]  = \log p({\bf X}^n | {\bf \theta}_0, M) - t^{-1}\lambda + t^{-1/2}V_n + O_p(1),  \label{thermo_asymp}
\end{equation}
where $t$ is a variable satisfying $t=c/\log n$ for some $c \in \mathbb{R}_{> 0}$ and $V_n$ is a random variable. In addition, the expectation of $V_n$ is equal to 0 and $V_n$ converges to $N(0, v_M)$ in law as $n \to \infty$, where $v_M \in \mathbb{R}_{> 0}$ is a constant.

Therefore, WBIC has the following properties:
\begin{eqnarray*}
\mathrm{WBIC}(M) &=& \log \{ L(M) \}+ O_p[\surd \{ \log (n) \}], \\
\mathbb{E}_{{\bf X}^n} \{ \mathrm{WBIC}(M) \} &=& \mathbb{E}_{{\bf X}^n} [\log \{L(M) \}] + O\{\log \log (n) \}.
\end{eqnarray*}

\section{Estimator of the real log canonical threshold}
\label{sec:est}

In statistics, the real log canonical threshold, also known as the {\it learning coefficient}, was first introduced by \cite{Watanabe1999, Watanabe2000, Watanabe2001a,Watanabe2009b}. The negative real log canonical threshold $(-\lambda)$ and its multiplicity $\mathfrak{m}$ are defined as the largest pole and its order of the zeta function, respectively:
\begin{equation*}
\zeta(z) := \int_\Omega K(\theta)^z \varphi(\theta) d\theta, 
\end{equation*}
where $K(\theta) = \int q(x) \log q(x)/p(x|\theta, M) dx$ and $z \in \mathbb{C}$.
Determining real log canonical thresholds is considered to be a challenging task. For work on real log canonical thresholds, see \cite{Aoyagi2009, Aoyagi2010a, Aoyagi2010b, Aoyagi2019},  \cite{Aoyagi2005}, \cite{Drton2017b}, \cite{Drton2017a}, \cite{Hayashi2017a, Hayashi2017b}, \cite{Rusakov2005}, \cite{Watanabe2001b}, \cite{Watanabe2003}, \cite{Watanabe2007}, \\
\cite{Yamazaki2003, Yamazaki2004, Yamazaki2005}, and \cite{Zwiernik2011}.

In the following sections, we introduce two consistent estimators of the real log canonical threshold.

\subsection{Estimator $\widehat{\lambda_\mathbb{E}}$}
\label{sec:est_e}

We consider estimating real log canonical thresholds by simulation. To distinguish a sample size $n$ of any given data, let $n_s$ denote the sample size of generated data for simulation.

Based on equation (\ref{thermo_asymp}), \cite{Watanabe2013} proposed an estimator $\widehat{\lambda_\mathbb{E}}$ of $\lambda$
\begin{eqnarray}
\widehat{\lambda_\mathbb{E}} &:=& - \frac{\mathbb{E}_{\bf \theta}^{t+\Delta} \{ \log p({\bf X}^{n_s} | {\bf \theta}, M) \} - \mathbb{E}_{\bf \theta}^{t} \{ \log p({\bf X}^{n_s} | {\bf \theta}, M) \}}{(t+\Delta)^{-1} - t^{-1}} \nonumber \\
&=& t \Bigl(t + \Delta \Bigr) \frac{\mathbb{E}_{\bf \theta}^{t+\Delta} \{ \log p({\bf X}^{n_s} | {\bf \theta}, M) \} - \mathbb{E}_{\bf \theta}^{t} \{ \log p({\bf X}^{n_s} | {\bf \theta}, M) \}}{\Delta}, \label{lambda_e}
\end{eqnarray}
where $\Delta=d/\log n_s \in \mathbb{R}_{> 0}$ and $t = c / \log n_s$ for some $c,d \in \mathbb{R}_{>0}$.

From equation (\ref{thermo_asymp}), $\widehat{\lambda_\mathbb{E}}$ has the following property:
\begin{equation*}
\widehat{\lambda_\mathbb{E}} = \lambda + O_p \left[ 1/\surd \{\log (n_s)\} \right].
\end{equation*}

Hence, $\widehat{\lambda_\mathbb{E}}$ is a consistent estimator of $\lambda$. 

We note that in principle, computing $\widehat{\lambda_\mathbb{E}}$ requires the Markov chain Monte Carlo method (MCMC) for each $\mathbb{E}_{\bf \theta}^{t}$ and $\mathbb{E}_{\bf \theta}^{t+\Delta}$. In order to reduce the computational cost, \cite{Watanabe2013} proposed an approximation $\widetilde{\mathbb{E}_{\bf \theta}^{t+\Delta}}$ of $\mathbb{E}_{\bf \theta}^{t+\Delta}$ by using $\mathbb{E}_{\bf \theta}^{t}$, which requires only one MCMC for $\mathbb{E}_{\bf \theta}^{t}$ to compute an approximation of the estimator $\widehat{\lambda_\mathbb{E}}$:
\begin{equation}
\widetilde{\mathbb{E}_{\bf \theta}^{t+\Delta}} \{ \log p({\bf X}^{n_s} | {\bf \theta}, M) \} := \frac{\mathbb{E}_{\bf \theta}^{t} [ \log p({\bf X}^{n_s} | {\bf \theta}, M) \exp\{\Delta \log p({\bf X}^{n_s} | {\bf \theta}, M) \} ] }{\mathbb{E}_{\bf \theta}^{t} [ \exp\{\Delta \log p({\bf X}^{n_s} | {\bf \theta}, M)\} ]}. \label{E_approx}
\end{equation}

Let $\widetilde{\lambda_\mathbb{E}}$ be defined as
\begin{equation*}
\widetilde{\lambda_\mathbb{E}} := t \Bigl(t + \Delta \Bigr) \frac{ \widetilde{\mathbb{E}_{\bf \theta}^{t+\Delta}} \{ \log p({\bf X}^{n_s} | {\bf \theta}, M) ] - \mathbb{E}_{\bf \theta}^{t} [ \log p({\bf X}^{n_s} | {\bf \theta}, M) \}}{\Delta}.
\end{equation*}

We demonstrate the difference between $\widehat{\lambda_\mathbb{E}}$ and $\widetilde{\lambda_\mathbb{E}}$ in simulation studies in Section \ref{sec:exp_gmm}.

In the next section, we propose a new consistent estimator.

\subsection{Estimator $\widehat{\lambda^m_\mathbb{V}}$}
\label{sec:est_v}

Let $\widehat{\lambda^1_{\mathbb{V}}}({\bf X}^n)$ be defined as
\begin{equation*}
\widehat{\lambda^1_{\mathbb{V}}}({\bf X}^n) := t^2 \mathbb{V}_{\bf \theta}^t \{ \log p({\bf X}^n | {\bf \theta}, M) \},
\end{equation*}
where $t = c / \log n_s$ for some $c \in \mathbb{R}_{> 0}$.

\begin{prop} \label{prop_1}
 $\widehat{\lambda^1_{\mathbb{V}}}({\bf X}^n)$ is a consistent estimator of $\lambda$.
\end{prop}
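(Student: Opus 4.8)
The plan is to reduce the statement to the asymptotic expansion (\ref{thermo_asymp}) by using the exact identity recorded in Section~\ref{sec:wbic}, namely that the posterior variance of the log-likelihood is the second derivative of the free energy and hence the $t$-derivative of the posterior mean,
\begin{equation*}
\mathbb{V}_{\bf \theta}^t \{ \log p({\bf X}^{n} | {\bf \theta}, M) \} = \frac{d^2 F(t)}{dt^2} = \frac{d}{dt} \, \mathbb{E}_{\bf \theta}^t \{ \log p({\bf X}^{n} | {\bf \theta}, M) \}.
\end{equation*}
Writing $g(t) := \mathbb{E}_{\bf \theta}^t \{ \log p({\bf X}^{n} | {\bf \theta}, M) \}$, the estimator is exactly $\widehat{\lambda^1_{\mathbb{V}}} = t^2 g'(t)$, so the problem becomes one of differentiating $g$ and extracting its leading behaviour as $t = c/\log n_s \to 0$.

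First I would differentiate the expansion $g(t) = \log p({\bf X}^{n}|{\bf \theta}_0,M) - t^{-1}\lambda + t^{-1/2} V_{n} + O_p(1)$ from (\ref{thermo_asymp}) term by term in $t$. The constant term contributes nothing, $-t^{-1}\lambda$ contributes $t^{-2}\lambda$, and $t^{-1/2}V_{n}$ contributes $-\tfrac12 t^{-3/2}V_{n}$. Multiplying by $t^2$ then gives
\begin{equation*}
\widehat{\lambda^1_{\mathbb{V}}} = \lambda - \tfrac12 t^{1/2} V_{n} + t^2 R'(t),
\end{equation*}
where $R(t) = O_p(1)$ denotes the remainder in (\ref{thermo_asymp}). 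Since $t = c/\log n_s \to 0$ and $V_{n}$ converges in law (so is $O_p(1)$), the middle term is $O_p[1/\surd\{\log(n_s)\}] = o_p(1)$, and it remains only to control the contribution of the remainder. This yields $\widehat{\lambda^1_{\mathbb{V}}} = \lambda + o_p(1)$, i.e. consistency, with the sharper rate $O_p[1/\surd\{\log(n_s)\}]$ once the remainder is handled at the right order.

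The step I expect to be the main obstacle is the rigorous justification of the term-by-term differentiation, i.e. showing $t^2 R'(t) = o_p(1)$. The expansion (\ref{thermo_asymp}) is an asymptotic statement about the \emph{values} of $g$ at $t = c/\log n_s$, not about its derivative, and differentiating an $O_p(1)$ error is not automatically legitimate. To make this rigorous I would either (i) establish a version of (\ref{thermo_asymp}) holding uniformly for $t$ in a shrinking neighbourhood of $c/\log n_s$ with a differentiable remainder, so that term-by-term differentiation is valid and $R'(t)$ is controlled; or (ii) bypass differentiation entirely by analysing $\mathbb{V}_{\bf \theta}^t$ directly through the singular-learning-theory representation of $F(t)$ used in \cite{Watanabe2009b, Watanabe2013}, from which the leading behaviour $\mathbb{V}_{\bf \theta}^t\{\log p\} = t^{-2}\lambda + o_p(t^{-2})$ can be read off. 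Either route reduces to the regularity of the free energy $F$ as a function of the inverse temperature, which is the genuine analytic content behind the proposition.
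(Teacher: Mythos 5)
Your proposal follows essentially the same route as the paper: the paper's proof likewise differentiates the expansion (\ref{thermo_asymp}) term by term in $t$, obtaining $\mathbb{V}_{\bf \theta}^t\{\log p({\bf X}^n|{\bf \theta},M)\} = t^{-2}\lambda - \tfrac{1}{2}t^{-3/2}V_n + \xi_n$ with $\xi_n$ asserted to be of negligible order, and then multiplies by $t^2$ to conclude $\widehat{\lambda^1_{\mathbb{V}}} = \lambda + O_p[1/\surd\{\log(n)\}]$. The only difference is that you explicitly flag the justification of differentiating the $O_p(1)$ remainder as a gap needing uniformity of the expansion, whereas the paper simply asserts the order of the differentiated remainder without further argument --- so if anything your write-up is the more careful of the two.
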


\begin{proof}

Taking the derivative of equation (\ref{thermo_asymp}) gives
\begin{equation}
\mathbb{V}_{\bf \theta}^t [ \log p({\bf X}^n | {\bf \theta}, M) ] = t^{-2}\lambda - \frac{t^{-3/2}}{2} V_n + \xi_n, \label{proof1}
\end{equation}
where $\xi_n$ is a random variable such that $\xi_n=o_p[\{ \log (n) \}^{-3/2}]$.
Multiplying both sides of the equation (\ref{proof1}) by $t^2$, we obtain
\begin{equation}
\widehat{\lambda^1_{\mathbb{V}}}({\bf X}^n) = \lambda - \frac{t^{1/2}}{2} V_n  + t^2 \xi_n  = \lambda + O_p \left[ 1/ \surd \{\log (n) \} \right]. \label{lambda_hat_asymp}
\end{equation}

Therefore, the estimator $\widehat{\lambda^1_{\mathbb{V}}}({\bf X}^n)$ is consistent.
\end{proof}

Let $X^{(n_s, k)}$ be a sample of $n_s$ independent and identically distributed observations for each $k=1,2,...,m$, and let $\widehat{\lambda^m_{\mathbb{V}}}$ be defined as
\begin{equation*}
\widehat{\lambda^m_{\mathbb{V}}} := \frac{\sum_{k=1}^{m} \widehat{\lambda^1_{\mathbb{V}}}({\bf X}^{(n_s,k)})}{m}.
\end{equation*}
Then, as a corollary of Proposition \ref{prop_1}, we obtain the following:

\begin{cor} \label{cor_1}
$\widehat{\lambda^m_{\mathbb{V}}}$ is a consistent estimator of $\lambda$.
\end{cor}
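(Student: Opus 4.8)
The plan is to obtain Corollary 1 directly from Proposition 1, using nothing more than the fact that a finite average of random variables sharing a common $O_p$ rate inherits that rate. All of the analytic content is already carried by equation (\ref{lambda_hat_asymp}); the corollary merely repackages it.

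First I would apply Proposition 1 to each replication separately. For every $k$, the sample ${\bf X}^{(n_s,k)}$ consists of $n_s$ i.i.d.\ draws from the data generating distribution, so equation (\ref{lambda_hat_asymp}) applies verbatim and gives
\begin{equation*}
\widehat{\lambda^1_{\mathbb{V}}}({\bf X}^{(n_s,k)}) = \lambda - \frac{t^{1/2}}{2} V^{(k)} + t^2 \xi^{(k)}, \qquad k = 1,\dots,m,
\end{equation*}
with $t = c/\log n_s$, where each $V^{(k)}$ has mean zero and converges in law to $N(0,v_M)$ and $\xi^{(k)} = o_p[\{\log(n_s)\}^{-3/2}]$. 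Averaging over $k$ then yields
\begin{equation*}
\widehat{\lambda^m_{\mathbb{V}}} = \lambda - \frac{t^{1/2}}{2m}\sum_{k=1}^{m} V^{(k)} + \frac{t^2}{m}\sum_{k=1}^{m} \xi^{(k)}.
\end{equation*}

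Next I would bound the two error terms. Because $m$ is held fixed, $m^{-1}\sum_k V^{(k)}$ is an average of $m$ terms each $O_p(1)$ and is therefore itself $O_p(1)$; multiplying by $t^{1/2} = \surd c/\surd\{\log(n_s)\}$ makes the first error term $O_p[1/\surd\{\log(n_s)\}]$. Likewise $m^{-1}\sum_k \xi^{(k)} = o_p[\{\log(n_s)\}^{-3/2}]$, and the factor $t^2$ only shrinks it further. Collecting terms gives $\widehat{\lambda^m_{\mathbb{V}}} = \lambda + O_p[1/\surd\{\log(n_s)\}]$, which tends to $\lambda$ in probability as $n_s \to \infty$, so the estimator is consistent.

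The argument presents no genuine obstacle, but the one point deserving care is being explicit about the limiting regime: consistency here holds as $n_s \to \infty$ with $m$ fixed, so it is the vanishing of $t$, rather than the averaging over replications, that forces convergence. The role of the averaging is instead variance reduction, and I would record this to motivate the construction. Since the $m$ samples are independent, the leading stochastic term $-\tfrac{t^{1/2}}{2m}\sum_k V^{(k)}$ has asymptotic variance smaller by a factor $m$, and a central limit argument shows that $\surd\{\log(n_s)\}\,(\widehat{\lambda^m_{\mathbb{V}}} - \lambda)$ converges in law to $N(0, c\,v_M/(4m))$, making the benefit of choosing $m > 1$ transparent.
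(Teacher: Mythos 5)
Your proposal is correct and matches the paper's route: the paper gives no separate proof of Corollary \ref{cor_1}, treating it as an immediate consequence of Proposition \ref{prop_1}, and your averaging decomposition $\widehat{\lambda^m_{\mathbb{V}}} = \lambda - \frac{t^{1/2}}{2m}\sum_{k=1}^m V^{(k)} + \frac{t^2}{m}\sum_{k=1}^m \xi^{(k)}$ is exactly the one the paper uses in its proof of Proposition \ref{prop_2}. Your closing central limit remark likewise reproduces Proposition \ref{prop_2}, including the asymptotic variance $c\,v_M/(4m\log n_s)$, so nothing is missing.
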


\begin{prop} \label{prop_2}
$\widehat{\lambda^m_{\mathbb{V}}}$ is asymptotically normal:
\begin{equation*}
\widehat{\lambda^m_{\mathbb{V}}} - \lambda \xrightarrow{d} N(0, \sigma^2_{\lambda}),
\end{equation*}
where $\displaystyle \sigma^2_{\lambda} = \frac{cv_M}{4m \log n_s}$ and $c \in \mathbb{R}_{>0}$.
\end{prop}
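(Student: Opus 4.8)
The plan is to start from the single-sample expansion (\ref{lambda_hat_asymp}) established in the proof of Proposition \ref{prop_1}, apply it to each of the $m$ independent samples, and then isolate the leading stochastic term of the average. For each $k=1,\dots,m$, writing $V_{n_s}^{(k)}$ and $\xi_{n_s}^{(k)}$ for the random variables attached to the sample ${\bf X}^{(n_s,k)}$, equation (\ref{lambda_hat_asymp}) gives
\begin{equation*}
\widehat{\lambda^1_{\mathbb{V}}}({\bf X}^{(n_s,k)}) - \lambda = -\frac{t^{1/2}}{2} V_{n_s}^{(k)} + t^2 \xi_{n_s}^{(k)},
\end{equation*}
where $V_{n_s}^{(k)} \xrightarrow{d} N(0,v_M)$ with $\mathbb{E}(V_{n_s}^{(k)})=0$, and $\xi_{n_s}^{(k)} = o_p[\{\log(n_s)\}^{-3/2}]$. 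Averaging over $k$ and using $t = c/\log n_s$ yields
\begin{equation*}
\widehat{\lambda^m_{\mathbb{V}}} - \lambda = -\frac{t^{1/2}}{2}\cdot\frac{1}{m}\sum_{k=1}^{m} V_{n_s}^{(k)} + \frac{t^2}{m}\sum_{k=1}^{m} \xi_{n_s}^{(k)}.
\end{equation*}

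Since $\sigma_\lambda$ depends on $n_s$ through $t$, I would read the assertion as the standardized statement $(\widehat{\lambda^m_{\mathbb{V}}} - \lambda)/\sigma_\lambda \xrightarrow{d} N(0,1)$, where $\sigma_\lambda = (t^{1/2}/2)\sqrt{v_M/m}$ so that $\sigma_\lambda^2 = cv_M/(4m\log n_s)$ as claimed. Dividing the previous display by $\sigma_\lambda$ makes the scaling factor $t^{1/2}$ cancel in the leading term, leaving $-\bigl(\tfrac{1}{m}\sum_{k=1}^{m} V_{n_s}^{(k)}\bigr)/\sqrt{v_M/m}$ plus a remainder. This reduces the problem to two independent pieces.

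For the remainder, a direct order count shows $t^2\sum_k \xi_{n_s}^{(k)}/m = o_p[\{\log(n_s)\}^{-7/2}]$, and dividing by $\sigma_\lambda = O[\{\log(n_s)\}^{-1/2}]$ gives $o_p[\{\log(n_s)\}^{-3}] = o_p(1)$, so it is asymptotically negligible. For the leading term, the key observation is that the $m$ samples are independent, so the joint law of $(V_{n_s}^{(1)},\dots,V_{n_s}^{(m)})$ is a product of marginals each converging to $N(0,v_M)$; hence the vector converges jointly to $m$ independent $N(0,v_M)$ variables, and by the continuous mapping theorem $\tfrac{1}{m}\sum_{k=1}^{m} V_{n_s}^{(k)} \xrightarrow{d} N(0,v_M/m)$. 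Dividing by $\sqrt{v_M/m}$ produces the standard normal limit, and Slutsky's theorem combines this with the negligible remainder to conclude.

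The step I expect to require the most care is the simultaneous handling of the two limits: the asymptotic normality of each $V_{n_s}^{(k)}$ holds as $n_s\to\infty$, while the scaling $t^{1/2}$ and the target variance $\sigma_\lambda^2$ both vanish in the same limit. Framing the conclusion through the standardized ratio is exactly what reconciles these, since the $t^{1/2}$ factors cancel and leave a genuine convergence-in-distribution statement driven solely by the fixed-$m$ average of the asymptotically Gaussian $V_{n_s}^{(k)}$. A secondary point worth making explicit is that independence of the samples is what upgrades the marginal convergence of each $V_{n_s}^{(k)}$ to the joint convergence needed for the continuous mapping step.
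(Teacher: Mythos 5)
Your proof is correct and takes essentially the same route as the paper's: apply the single-sample expansion (\ref{lambda_hat_asymp}) to each of the $m$ independent samples, average to get $\widehat{\lambda^m_{\mathbb{V}}} = \lambda - \frac{1}{2m}\bigl(\sum_{k=1}^m V_{(n,k)}\bigr)\surd\{c/\log(n_s)\} + o_p\bigl[1/\surd\{\log(n_s)\}\bigr]$, and conclude from $V_{(n,k)} \xrightarrow{d} N(0, v_M)$. Your write-up is in fact more careful than the paper's two-line proof, which leaves implicit both the standardization by the $n_s$-dependent $\sigma_\lambda$ (needed since $\sigma^2_\lambda \to 0$, so the literal unstandardized statement is degenerate) and the role of independence in upgrading the marginal convergence of each $V_{(n,k)}$ to the joint convergence required for the sum.
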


\begin{proof}
Let $V_{(n,k)}$ denote $V_n$ for $X^{(n, k)}$. From equation (\ref{lambda_hat_asymp}), we have
\begin{equation}
\widehat{\lambda^m_{\mathbb{V}}} = \lambda - \frac{1}{2m} \left(\sum_{k=1}^{m} V_{(n,k)} \right) \surd \{c/\log (n_s)\}  + o_p \left[ 1/\surd \{\log (n_s) \} \right].
\end{equation}
Since $V_{(n,k)} \xrightarrow{d} N(0, v_M)$ for each $k$ \citep{Watanabe2013}, we obtain the result.
\end{proof}

Note that according to Proposition \ref{prop_2}, we can reduce the bias by increasing $n_s$ and reduce the variance by increasing $n_s$ and $m$.
It is important to reduce not only bias but also variance, because for nested models $M_{i'} \subset M_{i}$ in model selection, the value of $\lambda (i',j)$ should be less than that of $\lambda (i,j)$, but their estimates may be reversed due to variance.
From Proposition \ref{prop_2}, increasing $m$ can reduce the variance in the order $1/m$, whereas increasing $n_s$ can reduce the variance in the order $1/\log n_s$.
In addition, we can compute $\widehat{\lambda^1_{\mathbb{V}}}({\bf X}^{(n_s,k)})$ in parallel. Therefore, the computation time of $\widehat{\lambda^m_{\mathbb{V}}}$ can be the same as that of $\widehat{\lambda^1_{\mathbb{V}}}$.

\subsection{Optimal choice of the hyperparameter}
\label{sec:opt_c}

As we have seen in the previous sections, $\widehat{\lambda_{\mathbb{E}}}$ has hyperparameters $(c, d)$ and $\widehat{\lambda^1_{\mathbb{V}}}$ has the hyperparameter $c$.
First, let us try to find an optimal choice of $c$ and $d$ that makes $\widehat{\lambda_{\mathbb{E}}}$ an unbiased estimator.
As in the same argument in equation (\ref{unitemp}), we have the unique temperature $t^*$ such that
\begin{equation*}
\mathbb{E}_{\bf \theta}^{t^*} \{ \log p({\bf X}^{n_s} | {\bf \theta}, M) \} = \log p({\bf X}^{n_s} | {\bf \theta_0}, M) + \frac{\lambda}{t^*}. 
\end{equation*}

Then, we have
\begin{equation*}
\mathbb{E}_{\bf \theta}^{t^* + \delta} \{ \log p({\bf X}^{n_s} | {\bf \theta}, M) \} = \log p({\bf X}^{n_s} | {\bf \theta_0}, M) + \frac{\lambda}{t^* + \delta} + \eta_n,
\end{equation*}
where $\eta_n$ is a random variable and $\delta \in \mathbb{R}_{>0}$. $\eta_n$ takes a positive value because $\mathbb{E}_{\bf \theta}^{t} \{ \log p({\bf X}^{n_s} | {\bf \theta}, M) \}$ is an increasing function,
\begin{eqnarray*}
\mathbb{E}_{\bf \theta}^{t^*} \{ \log p({\bf X}^{n_s} | {\bf \theta}, M) \} &<& \mathbb{E}_{\bf \theta}^{t^* + \delta} \{ \log p({\bf X}^{n_s} | {\bf \theta}, M) \} \nonumber \\
\Rightarrow \log p({\bf X}^{n_s} | {\bf \theta_0}, M) + \frac{\lambda}{t^*} &<& \log p({\bf X}^{n_s} | {\bf \theta_0}, M) + \frac{\lambda}{t^* + \delta} + \eta_n \nonumber \\
 &<& \log p({\bf X}^{n_s} | {\bf \theta_0}, M) + \frac{\lambda}{t^*} + \eta_n \nonumber \\
\Rightarrow 0 &<& \eta_n.
\end{eqnarray*}
In addition, since the optimal temperature is unique, $\eta_n \neq 0$ unless $d=0$ and as $d \to 0$, $\eta_n \to 0$. Therefore, the only candidate optimal point $(c^*, d^*)$ for $\widehat{\lambda_{\mathbb{E}}}$ in the neighborhood of $t^*$ is $(c^*, d^*)=(t^*\log (n), 0)$. However, when $d=0$, we have $\Delta=0$ and then $\widehat{\lambda_{\mathbb{E}}}$ is incomputable. Therefore, there is no optimal point $(c^*, d^*)$ for $\widehat{\lambda_{\mathbb{E}}}$ in the neighborhood of $t^*$. In Section \ref{sec:exp_gmm}, we perform numerical experiments and demonstrate the behavior of $\widehat{\lambda_{\mathbb{E}}}$ when $d$ is small. In addition, since $\mathbb{E}_{\bf \theta}^{t} \{ \log p({\bf X}^{n_s} | {\bf \theta}, M) \}$ is an increasing function, the above argument holds for any $t$. Hence, we can not construct an unbiased estimator based on $\widehat{\lambda_{\mathbb{E}}}$.

Next, let us find an optimal choice of $c$ such that $\widehat{\lambda^1_{\mathbb{V}}}$ is an unbiased estimator. Taking the limit of $\Delta$ in equation (\ref{lambda_e}) gives:
\begin{equation*}
\widehat{\lambda^1_{\mathbb{V}}} = \lim_{\Delta \to 0} \widehat{\lambda_{\mathbb{E}}}.
\end{equation*}

Therefore, an optimal point of $c$ for $\widehat{\lambda^1_{\mathbb{V}}}$ is
\begin{equation}
c^* = t^* \log (n). \label{opt_c}
\end{equation}

On the other hand, \cite{Watanabe2013} showed that
\begin{equation}
t^* = \frac{1}{\log (n)} + o_p\left\{ \frac{1}{\log (n)} \right\}. \label{opt_t}
\end{equation}
Here, the term $o_p\{ 1/\log (n) \}$ depends on a model, a prior, and a data-generating distribution. Therefore, the model-free term is the only leading term. Hence, from equations (\ref{opt_c}) and (\ref{opt_t}), the optimal point $c^*$ for $\widehat{\lambda^1_{\mathbb{V}}}$ is
\begin{equation*}
c^* = 1 + o_p(1),
\end{equation*}
where the $o_p(1)$ depends on a model, a prior, and a data-generating distribution.

\subsection{Effective number of parameters}
\label{sec:effnum}

Here we compare the real log canonical threshold with the effective number of parameters.

The effective number of parameters was introduced by \cite{Spiegelhalter2002} as 
\begin{equation*}
p_D = - 2\mathbb{E}_{\bf \theta}^{t=1} \{ \log p({\bf X}^{n} | {\bf \theta}, M) \} + 2\log p\{ {\bf X}^{n} |  \mathbb{E}_{\bf \theta}^{t=1} ({\bf \theta}), M) \}.
\end{equation*}

$p_D$ is not invariant under reparametrization and can have negative values \citep{Spiegelhalter2014}. \cite{Gelman2004} proposed a modified effective number of parameters:
\begin{equation*}
p_{\mathbb{V}} =  2 \mathbb{V}_{\bf \theta}^{t=1} \{ \log p({\bf X}^n | {\bf \theta}, M) \}.
\end{equation*}

$p_{\mathbb{V}}$ is invariant to reparametrization and is always positive \citep{Spiegelhalter2014}. We note that \cite{Watanabe2010} showed $p_{\mathbb{V}}/2$ is an asymptotically biased estimator of real log canonical thresholds in general:
\begin{equation}
\lim_{n \to \infty} \mathbb{E}_{{\bf X}^n} (p_{\mathbb{V}}/2) = \lambda + \nu'(1), \label{pV_asymp}
\end{equation}
where $\nu(t)$ is called the singular fluctuation and is defined as
\begin{equation*}
\nu(t) := \lim_{n \to \infty} \frac{t}{2} \mathbb{E}_{{\bf X}^n} \Bigl[ \sum_{i=1}^n  \mathbb{V}_{\bf \theta}^{t} \{ \log p({\bf X}_i | {\bf \theta}, M) \}  \Bigr],
\end{equation*}
and $\nu'(t)$ is the first derivative of $\nu(t)$.

We investigate the performances of $p_{\mathbb{V}}/2$ and the other estimators of the real log canonical threshold in numerical experiments in Section \ref{sec:exp_gmm}.

\section{Widely applicable sBIC}
\label{sec:wsbic}

In this section, we first briefly introduce sBIC \citep{Drton2017b} and then propose the widely applicable sBIC.

Let $I$ be a finite index, $\{ M_i | i \in I \}$ a set of candidate models, $p(M_i)$ a prior probability of model $M_i$, and $p(M_i|{\bf X}^n)$ its posterior probability of model $M_i$. We define $i \preceq j$ for $i, j \in I$ when $M_i \subseteq M_j$.
Let $\lambda(i,j)$ and $\mathfrak{m}(i,j)$ be the real log canonical threshold and its multiplicity of $M_i$ with the data-generating distribution $q \in M_j$. 
$L_{ij}$ is defined as
\begin{equation*}
L_{ij} := p({\bf X}^n | \hat{\bf \theta}_i, M_i) \frac{(\log n)^{\mathfrak{m}(i,j) - 1}}{n^{\lambda(i,j)}},
\end{equation*}
where $\hat{\bf \theta}_i$ is the maximum likelihood estimator of $\theta_i$.

sBIC for model $M_i$ is based on a weighted average of $L_{ij}$ by posterior probabilities $p(M_j|{\bf X}^n)$:
\begin{equation}
S(M_i) := \frac{\sum_{j \preceq i} L_{ij} p(M_j|{\bf X}^n)}{\sum_{j \preceq i} p(M_j|{\bf X}^n)} = \frac{\sum_{j \preceq i} L_{ij} p(M_j) L(M_j) }{\sum_{j \preceq i} p(M_j) L(M_j)}. \label{sbic_pre}
\end{equation}

Here, the marginal likelihood $L(M_j)$ is what we would like to evaluate, and then by replacing $L(M_j)$ by $S(M_j)$ in equation (\ref{sbic_pre}), we obtain
\begin{equation}
S(M_i) = \frac{\sum_{j \preceq i} L_{ij} p(M_j) S(M_j) }{\sum_{j \preceq i} p(M_j) S(M_j)}. \label{sbic}
\end{equation}

\cite{Drton2017b} showed that equation (\ref{sbic}) has the unique positive solution $S(M_i)_{+}$ and defined sBIC for model $M_i$ as
\begin{equation*}
\mathrm{sBIC}(M_i) := \log S(M_i)_{+}.
\end{equation*}
$\mathrm{sBIC}(M_i)$ has the following asymptotic property:
\begin{equation*}
\mathrm{sBIC}(M_i) = \log L(M_i) + O_p(1).
\end{equation*}

For cases in which $\lambda(i,j)$ and $\mathfrak{m}(i,j)$ are unknown, \cite{Drton2017b} proposed using an upper bound $\overline{\lambda(i,j)}$ of $\lambda(i,j)$ and the lower bound of $\mathfrak{m}(i,j),$ the latter of which is equal to 1.
sBIC in this manner is denoted by $\overline{\rm sBIC}$.

However, in general, it is difficult to accurately compute $L_{ij}$ since few exact real log canonical thresholds or tight upper bounds are known.
Therefore, instead of $L_{ij}$, we propose to use $\widehat{L_{ij}}$:
\begin{equation*}
\widehat{L_{ij}} := p({\bf X}^n | \hat{\bf \theta}_i, M_i) \frac{1}{n^{\widehat{\lambda^m_\mathbb{V}}(i,j)}}.
\end{equation*}
To obtain $\widehat{\lambda^m_\mathbb{V}}(i,j)$, first we set $n_s, m \in \mathbb{Z}$, then generate data ${\bf X}^{(n_s,k)}=({\bf X}_1,...,{\bf X}_{n_s})_k$ independently from $M_j$ for each $k=1,2,...,m$, and finally compute:
\begin{equation*}
\widehat{\lambda^m_{\mathbb{V}}}(i,j) = \frac{\sum_{k=1}^m \widehat{\lambda^1_{\mathbb{V}}}({\bf X}^{(n_s,k)})}{m}.
\end{equation*}
We call sBIC based on $\widehat{L_{ij}}$ the widely applicable sBIC (WsBIC).

\section{Numerical experiments}
\label{sec:exp}

We conduct three numerical experiments and one application to real data to investigate the performances of $\widehat{\lambda^m_\mathbb{V}}$ and WsBIC.
First, we compare  the estimators $\widehat{\lambda^m_\mathbb{V}}$, $\widetilde{\lambda^m_\mathbb{V}}$, $\widehat{\lambda_\mathbb{E}}$, and $p_\mathbb{V}/2$ to assess the biases and variances of the estimators.
Second, we evaluate the bias of $\widehat{\lambda^m_\mathbb{V}}$ and compare the performances of sBIC and WsBIC for a model in which the exact real log canonical thresholds are known.
Third, we compare the performances of $\overline{\rm sBIC}$, WsBIC, and WBIC for a model in which only the upper bounds of the real log canonical thresholds are known.
Finally, we apply WsBIC to real data and compare BIC and WBIC.

\subsection{Comparison of the estimators of real log canonical thresholds}
\label{sec:exp_gmm}

In this section, we consider the following mixture model with two normal distributions:
\begin{equation*}
\alpha N(\mu_1, 1) + (1-\alpha) N(\mu_2, 1).
\end{equation*}
When the data generating model is $N(0, 1)$, \cite{Aoyagi2010a} showed the real log canonical threshold $\lambda$ is 3/4.

We conduct 1000 simulations for each sample size $n_s=50, 100, 200, 500, 1000$ to compute $\widehat{\lambda^m_\mathbb{V}}$, $\widehat{\lambda_\mathbb{E}}$, $\widetilde{\lambda_\mathbb{E}}$, and $p_\mathbb{V}/2$.
We set the prior $\alpha \sim {\rm Unif}(0,1)$, $\mu_1, \mu_2 \sim N(0, 4)$, $c=1$, and $d=1/10, 1, 10$ for $\widehat{\lambda_\mathbb{E}}$ and $\widetilde{\lambda_\mathbb{E}}$, and $m=1, 10, 100$ for $\widehat{\lambda^m_\mathbb{V}}$. 
We use the Hamiltonian Monte Carlo method, implemented in the R package {\it RStan} \citep{RStan}, to obtain the posteriors.

\begin{table}[htbp]
\begin{center}
\caption{Estimates of the real log canonical threshold of the Gaussian mixture model by each method}
{\fontsize{8pt}{8pt}\selectfont
\begin{tabular}{@{\extracolsep{4pt}}lcrrrrrrrrrr@{}} \hline
Method & \# of & \multicolumn{10}{c}{$n_s$}    \\ 
       &MCMC   &  \multicolumn{2}{c}{50}& \multicolumn{2}{c}{100}& \multicolumn{2}{c}{200}& \multicolumn{2}{c}{500}& \multicolumn{2}{c}{1000}\\ \cline{3-4} \cline{5-6} \cline{7-8} \cline{9-10} \cline{11-12}
       &   &  mean &(s.d.)& mean &(s.d.)& mean &(s.d.)& mean &(s.d.)& mean &(s.d.)\\ \hline
$\lambda$ (exact) &  - &  0.750 &  - &  0.750 &  - &  0.750 & - & 0.750 & - & 0.750 & - \\
$\widehat{\lambda^1_\mathbb{V}}$                    &  1 &  0.835 &  (0.138) & 0.817  &  (0.134)  & 0.807 & (0.143) & 0.787 & (0.132) & 0.764 & (0.139)\\
$p_\mathbb{V}/2$                                            &  1 &  0.866 &  (0.205) & 0.845   & (0.237)   & 0.837 & (0.286) & 0.813 & (0.286) & 0.786 & (0.272) \\
$\widetilde{\lambda_\mathbb{E}}_{(d=1/10)}$ &  1 &  1.544 &  (0.951) &  1.530 &  (0.826) & 1.523 & (0.847) & 1.479 & (0.776) & 1.430 & (0.833) \\ 
$\widetilde{\lambda_\mathbb{E}}_{(d=1)}$       &  1 &  1.148 &  (0.151) &  1.112 &  (0.130) & 1.092 & (0.146) & 1.054 & (0.130) & 1.026 & (0.131) \\ 
$\widetilde{\lambda_\mathbb{E}}_{(d=10)}$     &  1 &  0.927 &  (0.117) &  0.898 &  (0.108) & 0.880 & (0.121) & 0.847 & (0.105) & 0.821 & (0.098) \\
$\widehat{\lambda_\mathbb{E}}_{(d=1/10)}$   &  2 &  0.841 &  (1.023) &  0.839 &  (0.930) & 0.850 & (0.893) & 0.829 & (0.819) & 0.798 & (0.887) \\ 
$\widehat{\lambda_\mathbb{E}}_{(d=1)}$        &  2 &  0.863 &  (0.154) &  0.853 &  (0.154) & 0.824 & (0.148) & 0.799 & (0.129) & 0.777 & (0.137) \\ 
$\widehat{\lambda_\mathbb{E}}_{(d=10)}$      &  2 &  0.877 &  (0.109) &  0.856 &  (0.108) & 0.832 & (0.109) & 0.800 & (0.094) & 0.778 & (0.088) \\ \hline 
$\widehat{\lambda^{10}_\mathbb{V}}$                &  10 &  0.837 &  (0.044) &  0.820 &  (0.046) &  0.810 & (0.045) & 0.785 & (0.041) & 0.762 & (0.042)\\
$\widehat{\lambda^{100}_\mathbb{V}}$            &  100 &  0.835 &  (0.013) &  0.817 &  (0.014) &  0.806 & (0.013) & 0.787 & (0.012) & 0.763 & (0.013)\\ \hline
\end{tabular}
}
\label{tab_gmm}
\end{center}
\end{table}

The results of the simulations are shown in Table \ref{tab_gmm} and Figure \ref{fig:gmm}.

First, Table \ref{tab_gmm} shows that the means of all the estimates seem to approach the true value $3/4$ as $n_s$ increases. Among the methods that use MCMC once or twice, $\widehat{\lambda^1_\mathbb{V}}$ has the lowest bias of all $n_s$. 
The performances of $\widehat{\lambda_\mathbb{E}}$ and $\widetilde{\lambda_\mathbb{E}}$ depend on $d$. As $d$ increases, the variances of $\widehat{\lambda_\mathbb{E}}$ and $\widetilde{\lambda_\mathbb{E}}$ decrease and the bias of $\widetilde{\lambda_\mathbb{E}}$ decreases for each $n_s$. 
Comparing $\widehat{\lambda_\mathbb{E}}$ and $\widetilde{\lambda_\mathbb{E}}$, the estimate of $\widetilde{\lambda_\mathbb{E}}$ is larger than that of $\widehat{\lambda_\mathbb{E}}$ for each $d$ and $n_s$. This may due to the bias caused by approximating the thermodynamic integration in equation (\ref{E_approx}).
When $d=1/10$, both $\widehat{\lambda_\mathbb{E}}$ and $\widetilde{\lambda_\mathbb{E}}$ have high variance. In addition, $\widetilde{\lambda_\mathbb{E}}$ has high bias when $d=1/10, 1$. Regarding bias, $\widehat{\lambda_\mathbb{E}}$ with $d=1$ is the best among  $\widehat{\lambda_\mathbb{E}}$ and $\widetilde{\lambda_\mathbb{E}}$.
$p_\mathbb{V}/2$ has higher bias and variance than $\widehat{\lambda^1_\mathbb{V}}$. This might reflect the effect of $\nu'(1)$ as seen in equation (\ref{pV_asymp}).

Next, from Figure \ref{fig:gmm}, we see that both $\widehat{\lambda_\mathbb{E}}$ and $\widetilde{\lambda_\mathbb{E}}$ take negative values whereas $\widehat{\lambda^1_\mathbb{V}}$ and $p_\mathbb{V}/2$ do not. As $m$ increases, the variances of $\widehat{\lambda^m_\mathbb{V}}$ decrease. From Proposition \ref{prop_2}, increasing $n_s$ and $m$ decreases the variance of $\widehat{\lambda^m_\mathbb{V}}$, and increasing $m$ most efficiently decreases the variance since the asymptotic variance has factor $1/m$ for $m$, compared to $1/\log n_s$ for $n_s$.

Here, we note that the variance of the estimator is important because the smaller the variance, the smaller the probability of $\widehat{\lambda}(i',j) > \widehat{\lambda}(i,j)$ for nested models $M_{i'} \subset M_i$. In model selection, $\widehat{\lambda}(i',j) < \widehat{\lambda}(i,j)$ for nested models $M_{i'} \subset M_i$ is an essential condition and we need to recompute $\widehat{\lambda}(i',j), \widehat{\lambda}(i,j)$ if we obtain  $\widehat{\lambda}(i',j) > \widehat{\lambda}(i,j)$.

\begin{figure}
\includegraphics[width=\textwidth]{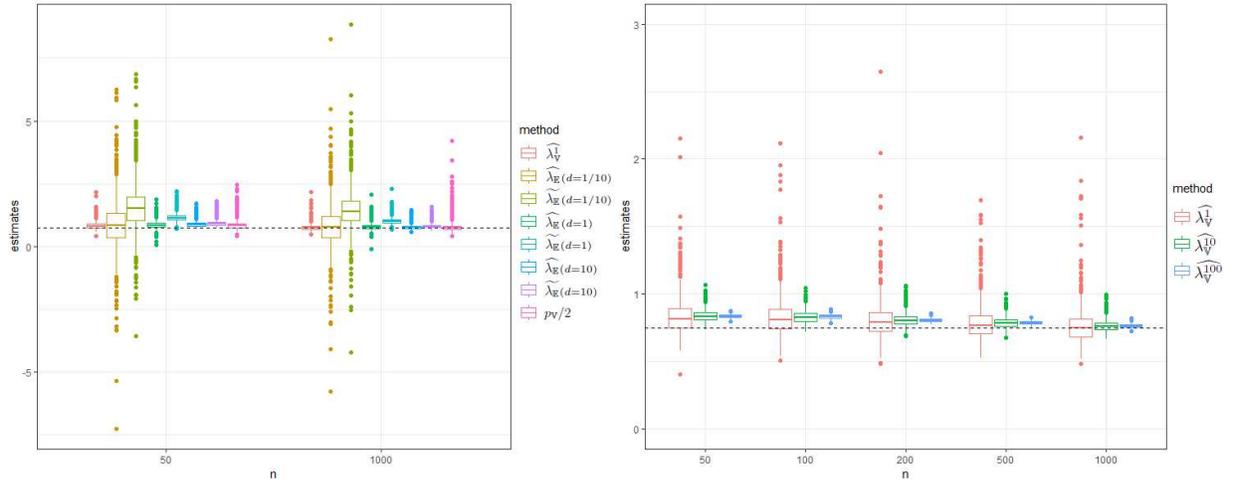}
\caption{ Boxplots of estimates of the real log canonical threshold of the Gaussian mixture model. The black dashed line indicates a value of $3/4$, which is the true real log canonical threshold. (left) The estimates of $\widehat{\lambda_\mathbb{E}}$, $\widetilde{\lambda_\mathbb{E}}$, $\widehat{\lambda^1_\mathbb{V}}$, and $p_\mathbb{V}/2$. (right) The estimates of $\widehat{\lambda^1_\mathbb{V}}$, $\widehat{\lambda^{10}_\mathbb{V}}$, and $\widehat{\lambda^{100}_\mathbb{V}}$.}
\label{fig:gmm}
\end{figure}

\subsection{Comparison of $\widehat{\lambda^m_\mathbb{V}}$ and $\lambda$ and performance of sBIC and WsBIC}

Here, we estimate the real log canonical thresholds and compare the exact values using reduced rank regression because this is the singular model in which the exact real log canonical thresholds are known \citep{Aoyagi2005}. 

Reduced rank regression is defined as follows \citep{Reinsel1998}. Let $Y \in \mathbb{R}^N, X \in \mathbb{R}^M,$ and rank $H \in \mathbb{Z}$ and $0 \le H \le \min \{ M, N \}$. Let the parameter $C \in \mathbb{R}^{N \times M}$ and ${\rm rank} (C) \le H$. Then, for sample size $n$, reduced rank regression is defined as
\begin{equation*}
Y \sim N(CX, I_n \otimes I_N).
\end{equation*}

To compute $\widehat{\lambda^m_\mathbb{V}}(i,j)$, we set $n_s=10000, m=100$ and use the Metropolis Hastings method as in the program code on Sumio Watanabe's website  (\\
\noindent \url{http://watanabe-www.math.dis.titech.ac.jp/users/swatanab/wbic_reduced.m}).

The results are summarized in Table \ref{tab_rrr}.
Clearly, the table shows that the estimates $\widehat{\lambda^m_\mathbb{V}}(i,j)$ are close to $\lambda (i,j)$. 
To obtain more precise values of $\lambda (i,j)$, it is sufficient to simply increase $n_s$.

When $n_s$ is large, sBIC and WsBIC have almost same values since $\widehat{\lambda^m_\mathbb{V}}(i,j)$ and $\lambda (i,j)$ have similar values. Therefore, the performances of sBIC and WsBIC are expected to be similar. To demonstrate this,  we conduct 200 simulations for each sample size $n=10, 20, 50$ to compute sBIC and WsBIC based on $\lambda (i,j)$ and $\widehat{\lambda^m_\mathbb{V}}(i,j)$ in Table \ref{tab_rrr}, implemented in the R package {\it sBIC} \citep{Weihs2016}. We also compute WBIC for comparison.

The results are shown in Figure \ref{fig_rrr}. The performances of sBIC and WsBIC are similar and outperform that of WBIC. Here, we would like to emphasize that WsBIC and WBIC do not require theoretical values of  $\lambda (i,j)$, whereas sBIC does. Hence, WsBIC can be widely applied and is expected to have the almost same performance as sBIC.

\begin{table*}[t]
\begin{center}
{\fontsize{9pt}{9pt}\selectfont
\caption{ $\lambda (i,j)$ and $\widehat{\lambda^m_{\mathbb{V}}}(i,j)$ for reduced rank regression with values of $i,j$ ($i \ge j$),  $(n_s,m)=(2000,100)$.}
\begin{tabular}{@{\extracolsep{4pt}}lrrrrrrrrrr@{}} \hline
 & \multicolumn{2}{c}{$j=1$} & \multicolumn{2}{c}{$j=2$} & \multicolumn{2}{c}{$j=3$} & \multicolumn{2}{c}{$j=4$} & \multicolumn{2}{c}{$j=5$} \\ 
 & $\lambda (i,j)$ & $\widehat{\lambda^m_{\mathbb{V}}}(i,j)$ & $\lambda (i,j)$ & $\widehat{\lambda^m_{\mathbb{V}}}(i,j)$ & $\lambda (i,j)$ & $\widehat{\lambda^m_{\mathbb{V}}}(i,j)$ & $\lambda (i,j)$ & $\widehat{\lambda^m_{\mathbb{V}}}(i,j)$ & $\lambda (i,j)$ & $\widehat{\lambda^m_{\mathbb{V}}}(i,j)$ \\ \cline{2-3} \cline{4-5} \cline{6-7} \cline{8-9} \cline{10-11} 
$i=1$ & 5.5  & 5.50  &  & & &   & & & &\\
$i=2$ & 8  & 7.91  & 10  & 10.01  & &  & & & &\\
$i=3$ & 10  & 9.92  & 12  & 11.75  & 13.5  & 13.49 & & & & \\
$i=4$ & 12  & 11.75  & 13.5  & 13.30  & 15  & 14.79 & 16 & 16.02 & & \\
$i=5$ & 13.5  & 13.32  & 15  & 14.65  & 16  & 15.81 & 17 & 16.79 & 17.5 & 17.35  \\
\hline
\end{tabular}
}
\label{tab_rrr}
\end{center}
\end{table*}

\begin{figure*}
\includegraphics[width=\textwidth]{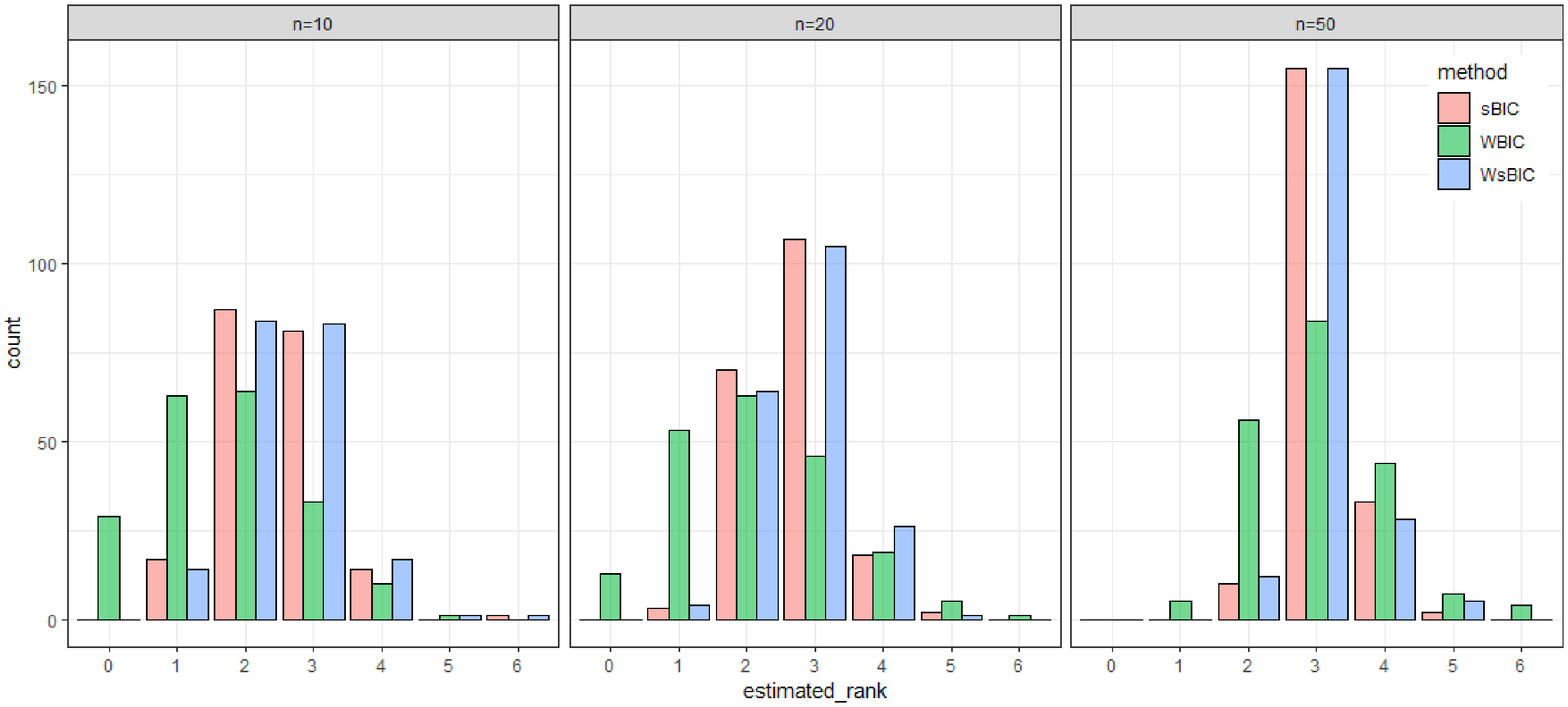}
\caption{Performances of BIC, sBIC, and WsBIC in the reduced regression model}
\label{fig_rrr}
\end{figure*}

\subsection{Comparison of $\widehat{\lambda^m_\mathbb{V}}$ and $\bar{\lambda}$ and performances of $\overline{\rm sBIC}$ and WsBIC}
\label{sec:bmm_exp}

Interesting cases in which to compare sBIC and WsBIC are the models whose upper bounds of the real log canonical thresholds are known, but the exact values are not.
An example is the binomial mixture model. With $i$ as the number of mixture components, the binomial mixture model is defined as
\begin{equation*}
\sum_{h=1}^i \pi_h B(k, p_h),
\end{equation*}
where $\pi_h \ge 0$, $\sum_{h=1}^i \pi_h=1$, and $B(k, p_h)$ is a binomial distribution.

\cite{Drton2017b} proposed two $\overline{\rm sBIC}$ based on two different upper bounds of the real log canonical thresholds:
\begin{eqnarray*}
\bar{\lambda}^{1}(i,j) &:=& \frac{i+j}{2} - \frac{1}{2}, \\
\bar{\lambda}^{0.5}(i,j) &:=& \frac{i+3j}{4} - \frac{1}{2}.
\end{eqnarray*}
sBIC based on $\bar{\lambda}^{1}(i,j)$ and $\bar{\lambda}^{0.5}(i,j)$ are denoted as $\overline{\rm sBIC}_{1}$ and $\overline{\rm sBIC}_{0.5}$, respectively.
Note that $\bar{\lambda}^{1}(i,j)$ is derived by simple parameter counting \citep{Watanabe2009b} whereas the derivation of $\bar{\lambda}^{0.5}(i,j)$ requires more complicated analysis \citep{Rousseau2011}.

To estimate $\widehat{\lambda^m_\mathbb{V}}(i,j)$, we set $n_s=10000, m=100$ and generate data from the binomial mixture model of $j$ components with sample size parameter $k=30$ and mixture weight $\pi_h = 1/j$ for each $h=1,...,j$. We set the prior of $\pi_h$ to the flat Dirichlet distribution and $p_h \sim {\rm logit}^{-1}({\rm Unif}(-\infty, \infty))$. We use the Hamiltonian Monte Carlo method to obtain the posteriors implemented in the R package {\it RStan} \citep{RStan}.

The results are shown in Table \ref{tab_bmm}.
In the above setting, most values of $\widehat{\lambda^m_\mathbb{V}}(i,j)$ lie between $\bar{\lambda}^{1}(i,j)$ and $\bar{\lambda}^{0.5}(i,j)$.

\begin{table*}[htbp]
\begin{center}
\caption{ $\bar{\lambda}^{1} (i,j)$, $\bar{\lambda}^{0.5} (i,j)$, and $\widehat{\lambda^m_{\mathbb{V}}}(i,j)$ for binomial mixture model with values of $i,j$ ($i \ge j$), $(n_s,m)=(10000,100)$}
\begin{tabular}{@{\extracolsep{4pt}}lrrrrrrrr@{}} \hline
 & \multicolumn{4}{c}{$j=1$} & \multicolumn{4}{c}{$j=2$}  \\ 
 & $d/2$ & $\bar{\lambda}^1 (i,j)$ & $\bar{\lambda}^{0.5} (i,j)$ & $\widehat{\lambda^m_{\mathbb{V}}}(i,j)$ & $d/2$ & $\bar{\lambda}^{1} (i,j)$& $\bar{\lambda}^{0.5} (i,j)$ & $\widehat{\lambda^m_{\mathbb{V}}}(i,j)$  \\ \cline{2-5} \cline{6-9} 
$i=1$& 0.5 & 0.5  & 0.5   & 0.49 & & & &     \\
$i=2$& 1.5 & 1    & 0.75  & 0.78  & 1.5& 1.5  & 1.5  & 1.45  \\
$i=3$& 2.5 & 1.5  & 1      & 1.29  & 2.5& 2     & 1.75  & 1.84 \\
$i=4$& 3.5 & 2    & 1.25  & 1.66 & 3.5 & 2.5  & 2      & 2.20  \\
\hline
\end{tabular}

\begin{tabular}{@{\extracolsep{4pt}}lrrrrrrrr@{}} \hline
 & \multicolumn{4}{c}{$j=3$} & \multicolumn{4}{c}{$j=4$}  \\ 
 & $d/2$ & $\bar{\lambda}^1 (i,j)$ & $\bar{\lambda}^{0.5} (i,j)$ & $\widehat{\lambda^m_{\mathbb{V}}}(i,j)$ & $d/2$ & $\bar{\lambda}^{1} (i,j)$& $\bar{\lambda}^{0.5} (i,j)$ & $\widehat{\lambda^m_{\mathbb{V}}}(i,j)$  \\ \cline{2-5} \cline{6-9} 
$i=1$ & & & & & & & & \\
$i=2$ & & & & & & & & \\
$i=3$ &2.5 & 2.5 & 2.5& 2.49 & & & & \\
$i=4$ &3.5 & 3& 2.75 & 2.79 & 3.5 & 3.5 & 3.5 & 3.52 \\
\hline
\end{tabular}

\label{tab_bmm}
\end{center}
\end{table*}

Next, we compare the performances of BIC, $\overline{\rm sBIC}_{0.5}$, $\overline{\rm sBIC}_{1}$, WBIC, and WsBIC by selecting the number of mixture components. We set the true number of components to two and the number of candidate components from one to four. We conduct 200 simulations for each $n=10, 20, 50$ to select the number of components by sBIC, WsBIC, WBIC and BIC. We compute sBIC, WsBIC and BIC using the R package {\it sBIC} \citep{Weihs2016} and WBIC using R package {\it RStan} \citep{RStan}.

\begin{figure*}
\includegraphics[width=\textwidth]{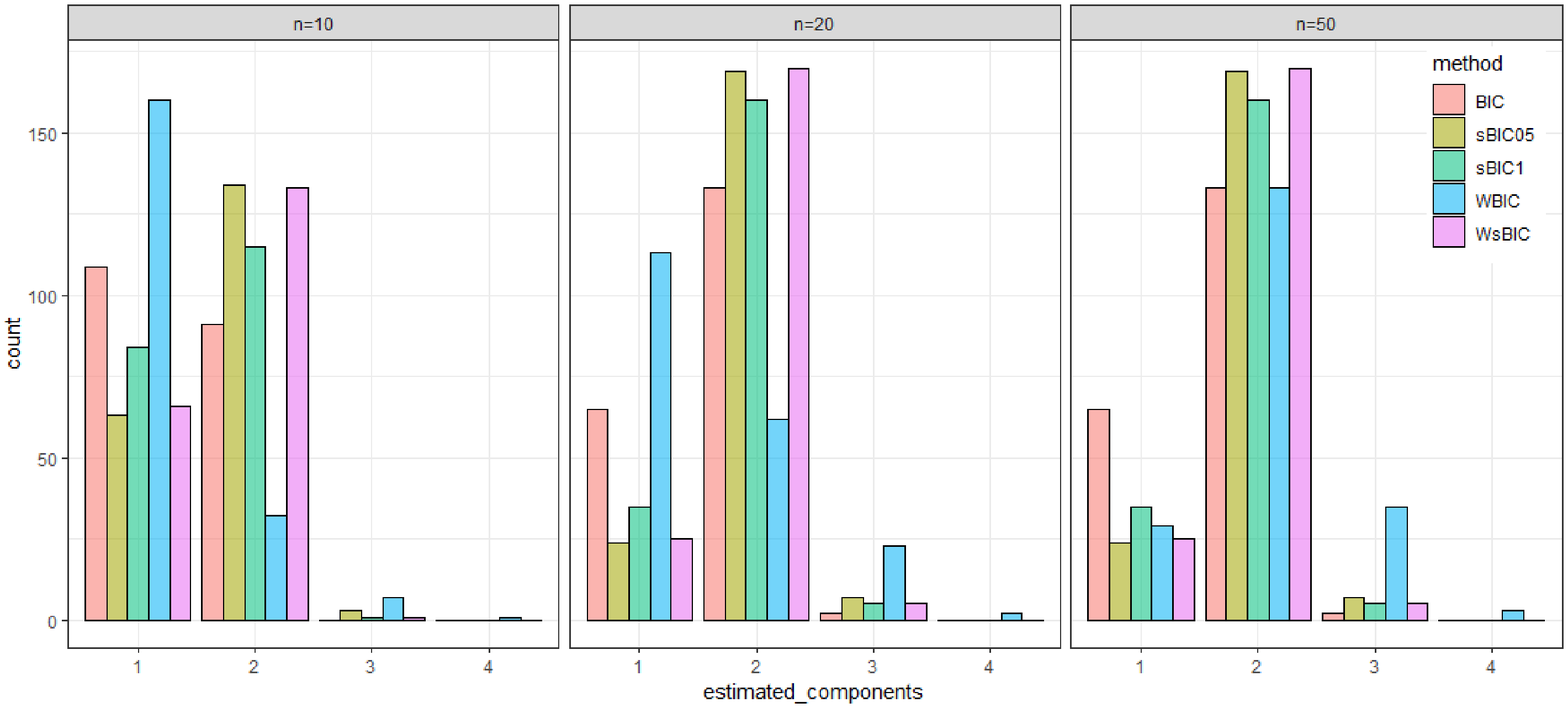}
\caption{Performances of BIC, $\overline{\rm sBIC}_{0.5}$, $\overline{\rm sBIC}_{1}$, WBIC, and WsBIC in the binomial mixture model}
\label{fig_bmm}
\end{figure*}

The results are shown in Figure \ref{fig_bmm}. 
First, the performances of $\overline{\rm sBIC}_{0.5}$ and WsBIC are best, and that of $\overline{\rm sBIC}_{1}$ is slightly inferior. For small sample sizes, WBIC does not perform well. 
Note that the binomial mixture model has tight upper bounds $\bar{\lambda}^{0.5}(i,j)$, which is shown by the complicated analysis, but few tight upper bounds are known. Therefore, we may not expect the performance of $\overline{\rm sBIC}_{0.5}$ in other singular models.
However, we may expect the performance of $\overline{\rm sBIC}_{1}$ because $\bar{\lambda}^{1}(i,j)$ is derived from simple parameter counting.
In addition, since we can think of BIC as sBIC with the trivial upper bound $d/2$, when we are able to use only loose upper bounds of the real log canonical thresholds, the performance of sBIC is expected to be much worse than that of WsBIC.

\subsection{Application of WsBIC}

In this section, we apply WsBIC to cormorant census data \citep{McCrea2014}. The data were collected from the Vors$\o$ colony in 1994. We use the April successful breeder census data to determine the number of classes of cormorants. The data are summarized in Table \ref{tab_corm}. f$_t$ represents the number of individuals captured $t$ times in 30 visits.

We compare the binomial mixture models with one to four components. We use $\widehat{\lambda^m_\mathbb{V}}(i,j)$ in Section \ref{sec:bmm_exp} for WsBIC, implemented in the R package {\it sBIC} \citep{Weihs2016}. We set the prior of $\pi_h$ to the flat Dirichlet distribution and $p_h \sim {\rm logit}^{-1}({\rm Unif}(-\infty, \infty))$, and use the Hamiltonian Monte Carlo method to obtain the posteriors to compute WBIC using the R package {\it RStan} \citep{RStan}.

The posterior model probabilities calculated using BIC, WBIC and WsBIC are shown in Figure \ref{fig_corm}. The model with the highest posterior model probability using BIC and WBIC has two components, whereas that using WsBIC has three components.
As seen in the simulation study, BIC and WBIC tend to select smaller models in cases of small sample sizes, and this might be reflected in the result.

For the mixture model with three components, we obtain $(\widehat{\pi_1}, \widehat{\pi_2}, \widehat{\pi_3}) = (0.438, 0.507, 0.055)$ and $(\widehat{p_1},\widehat{p_2},\widehat{p_3}) = (0.095, 0.302, 0.459)$. Since the weight of the third group is small, BIC and WBIC might not detect the third group.

\begin{table*}
\caption{Cormorant census data at the Vors$\o$ colony in April, 1994. f$_t$ represents the number of individuals captured $t$ times.}
\begin{tabular}{lllllllllllllllllllll} \hline
f$_1$ & f$_2$ & f$_3$ & f$_4$ & f$_5$ & f$_6$ & f$_7$ & f$_8$ & f$_9$ & f$_{10}$ & f$_{11}$ & f$_{12}$ & f$_{13}$ & f$_{14}$ & f$_{15}$ & f$_{16}$ & f$_{17}$ & f$_{18}$ & f$_{19}$ & f$_{20}$ & f$_{21}$\\ \hline
13 & 14 & 10 & 8 & 11 & 7 & 7 & 12 & 7 & 9 & 6 & 10 & 7 & 2 & 0 & 3 & 1 & 0 & 0 & 0 & 1\\ \hline
\end{tabular}
\label{tab_corm}
\end{table*}

\begin{figure}
\begin{center}
\includegraphics[width=0.5 \textwidth]{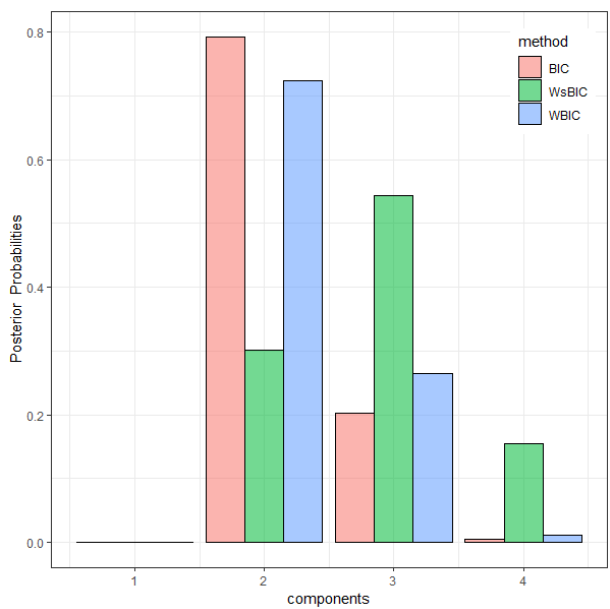}
\caption{Posterior model probabilities using BIC, WBIC, and WsBIC for cormorant census data}
\label{fig_corm}
\end{center}
\end{figure}

\section{Discussion}
\label{sec:dis}
In this paper, we propose a new consistent estimator $\widehat{\lambda^1_{\mathbb{V}}}$ of the real log canonical threshold and its multiple version  $\widehat{\lambda^m_{\mathbb{V}}}$.
In contrast to the existing method $\widehat{\lambda_{\mathbb{E}}}$, we show that the proposed method $\widehat{\lambda^m_{\mathbb{V}}}$ has an optimal hyperparameter $\widetilde{c^*}$.
In addition, we show that $\widehat{\lambda^m_{\mathbb{V}}}$ is consistent.
In our simulation studies, $\widehat{\lambda^m_{\mathbb{V}}}$ has the lowest bias of all methods and we can control the variance for $\widehat{\lambda^m_{\mathbb{V}}}$ by increasing $m$.

For applications using estimated real log canonical thresholds, we propose WsBIC. 
WsBIC does not require the theoretical values of real log canonical thresholds, whereas sBIC does. Hence, WsBIC is widely applicable.
Simulation studies show that WsBIC performs competitively compared to sBIC and $\overline{\rm sBIC}$ with tight upper bounds.

In future works, this research can be used to evaluate MCMC in singular models, since as \cite{Watanabe2018} states, \textit{``In order to check how accurate MCMC approximates the posterior distribution in singular cases, the real log canonical threshold would be a good index for a given set of a true distribution, a statistical model, and a prior.''}
Another future topic is to estimate the multiplicities $\mathfrak{m}$ of real log canonical thresholds, which would give a better evaluation of the marginal likelihood via WsBIC. Finally, identifying an optimal temperature $t^*$ is important to improve WBIC and also to construct an unbiased estimator of real log canonical thresholds.

\begin{appendices}
\section{Assumptions}

In this appendix, we introduce the technical four assumptions that the singular learning theory requires \citep{Watanabe2000, Watanabe2001a, Watanabe2009b}.

\noindent {\it Assumption (a)}. The set of parameters ${\bf \Omega}$ is a compact set in $\mathbb{R}^d$ and can be defined by analytic functions $\pi_1, \pi_2, ..., \pi_k$;
\begin{equation*}
{\bf \Omega} = \{ \theta \in \mathbb{R}^d | \pi_1(\theta)\ge 0,..., \pi_k(\theta)\ge 0 \}.
\end{equation*}

\noindent {\it Assumption (b)}. The prior distribution $\varphi(\theta)$ can be decomposed as the product of a non-negative analytic function $\varphi_1$ and a positive differentiable function $\varphi_2$;
\begin{equation*}
\varphi(\theta) = \varphi_1(\theta) \varphi_2(\theta).
\end{equation*}

\noindent {\it Assumption (c)}. Let $s \ge 6$ and
\begin{equation*}
L^s(q) = \left\{ f(x) | \Bigl(\int |f(x)|^s q(x)dx \Bigr)^{1/s} < \infty \right\}
\end{equation*}
be a Banach space. There exists an open set ${\bf \Omega}' \supset {\bf \Omega}$ such that for $\theta \in {\bf \Omega}'$ the map $\theta \mapsto \log q(x)/p(x|\theta, M)$ is an $L^s(q)$-valued analytic function.

\noindent {\it Assumption (d)}. Let ${\bf \Omega}_\epsilon$ be the set
\begin{equation*}
{\bf \Omega}_\epsilon = \{ \theta \in {\bf \Omega} | K(\theta) \leq \epsilon \},
\end{equation*}
where $K(\theta) = \int q(x) \log q(x)/p(x|\theta, M) dx$. There exists a pair of positive constants $(\epsilon, c)$ such that
\begin{equation*}
 \mathbb{E}_X \{ \log q(X)/p(X|\theta, M) \}  \ge c \mathbb{E}_X [ \{ \log q(X)/p(X|\theta, M) \}^2], \quad \forall \theta \in {\bf \Omega}_\epsilon.
\end{equation*}

\end{appendices}

\bibliographystyle{chicago}

\bibliography{rlct.bib}
\end{document}